\newtheorem{theorem}{Theorem}
\newtheorem{proposition}[theorem]{Proposition}
\newtheorem{lemma}[theorem]{Lemma}
\newtheorem{corollary}[theorem]{Corollary}
\newtheorem{remark}[theorem]{Remark}
\newtheorem{example}[theorem]{Example}
\newenvironment{proof}{{\noindent{\bf Proof:}}}{$\hfill\Box$}
\newcommand{\ket}[1]{|#1\rangle}
\newcommand{\bra}[1]{\langle#1|}
\def\Pr{\text{Pr}}
\def\det{\text{det}}
\newcommand{\tr}{\text{\rm{tr}}}
\begin{document}

\singlespacing

\title{A New Quantum Data Processing Inequality}

\author{Salman Beigi}
\affiliation{School of Mathematics, Institute for Research in Fundamental Sciences (IPM), P.O.\ Box 19395-5746, Tehran, Iran}

\begin{abstract}
Quantum data processing inequality bounds the set of bipartite states that can be generated by two far apart parties under local operations; having access to a bipartite state as a resource, two parties cannot locally transform it to another bipartite state with a mutual information greater than that of the resource state. Nevertheless, due to the additivity of quantum mutual information under tensor product, the data processing inequality gives no bound when the parties are provided with an arbitrary number of copies of the resource state. In this paper we introduce a measure of correlation on bipartite quantum states, called maximal correlation, that is not additive and gives the same value  when computed for multiple copies. Then, by proving a data processing inequality for this measure, we find a bound on the set of states that can be generated under local operations even when an arbitrary number of copies of the resource state is available.      
\end{abstract}

\date{March 8, 2023}

\maketitle

%\tableofcontents

%\narrowtext

%%%%%%%%%%%%%%%%%%%%%%%%%%%%%%%%%%%%%%%%%%%%%%%

\section{Introduction}
Let $\rho_{AB}$ be a bipartite quantum state on registers $A$ and $B$, and assume that an arbitrary number of copies of $\rho_{AB}$ are shared between two parties Alice and Bob. The goal of Alice and Bob is to generate some bipartite state $\sigma_{EF}$ under local operations but without communication. That is for some $n$, they want to apply local super-operators $\Phi_{A^n\rightarrow E}$ and $\Psi_{B^{n}\rightarrow F}$ such that 
$$\Phi\otimes \Psi (\rho_{AB}^{\otimes n})= \sigma_{EF}.$$
Typical examples of this problem are entanglement distillation and common randomness distillation under local operations, in which case $\sigma_{EF}$ is an ebit or one bit of shared randomness.

To answer this question one cannot look for such local operators by brute-force search since we assume $n$, the number of copies of the resource state $\rho_{AB}$ is arbitrarily large. On the other hand to obtain some necessary conditions on the existence of $\Phi$ and $\Psi$ one may 
compare the strength of correlations of $\rho_{AB}^{\otimes n}$ and $\sigma_{EF}$. If $\sigma_{EF}$ is more correlated than $\rho_{AB}^{\otimes n}$, then such operations do not exists since local transformations do not generate correlation. Nevertheless again since $n$ can be arbitrarily large, the standard measures of  correlation provide us with no bound. For instance the data processing inequality of mutual information states that if $\rho_{AB}^{\otimes n}$ can be locally transformed to $\sigma_{EF}$ then
\begin{align}\label{eq:add}
nI(A; B)_{\rho} = I(A^n; B^n)_{\rho^{\otimes n}}  \geq I(E; F)_{\sigma},
\end{align}
where $I(\cdot , \cdot)$ denotes the quantum mutual information. This inequality is loose for sufficiently large $n$ and gives us no bound if $I(A; B)_{\rho}\neq 0$.

In the classical case (where $\rho_{AB}$ and $\sigma_{EF}$ are bipartite random variables) there is a measure of correlation called \emph{Hirschfeld-Gebelein-R\'enyi maximal correlation} or simply the maximal correlation~\cite{Hirschfeld, Gebelein, Renyi1, Renyi2}. 
Maximal correlation has two main properties that are useful for the problem of local state transformation described above. First, it is \emph{not} additive on independent copies of a bipartite distribution, and indeed gives the same number when computed on independent copies. Second, it satisfies a data processing inequality. Using these two properties maximal correlation gives a non-trivial bound on our problem in the classical case.

The main contribution of this paper is to generalize maximal correlation to the quantum case.

\section{A new measure of correlation}\label{sec:new}
Let $\mathcal{H}_A$ be the Hilbert space corresponding to a quantum register $A$, which in this paper  is assumed to be finite dimensional. Denote the space of linear operators acting on $\mathcal{H}_A$ by $\mathbf{L}(\mathcal{H}_A)$. Similarly define $\mathbf{L}(\mathcal{H}_B)$ and equip these two spaces with the Hilbert-Schmidt inner product, i.e., $\langle M, N\rangle = \tr(M^{\dagger}N)$. This inner product induces a norm on the space of linear operators which we denote by $\|\cdot\|_2$.

For a bipartite quantum state $\rho_{AB}$ we define its \emph{maximal correlation} by 
\begin{align}
\mu(\rho_{AB})=\max\,\, & \vert \tr(\rho_{AB} X_A\otimes Y^{\dagger}_{B})\vert\nonumber\\
& \tr(\rho_AX_A) = \tr(\rho_B Y_B) =0,\label{eq:con1}\\
& \tr(\rho_A X_AX_A^{\dagger}) = \tr(\rho_B Y_BY_B^{\dagger}) =1.\label{eq:con2}
\end{align}
Here $\rho_A$ and $\rho_B$ are the reduced density matrices on subsystems $A$ and $B$ respectively, and $X_A\in \mathbf{L}(\mathcal{H}_A)$ and $Y_B\in \mathbf{L}(\mathcal{H}_B)$. Indeed, $\mu(\rho_{AB})$ is the maximum of the (absolute value of the) expectation of the tensor product of two local operators that have zero expectation and variance $1$. 

Maximal correlation in the classical case, where $\rho_{AB}$ is a joint distribution, is first introduced by
Hirschfeld~\cite{Hirschfeld} and Gebelein~\cite{Gebelein} and then studied by R\'enyi~\cite{Renyi1, Renyi2}. Witsenhausen in~\cite{Witsenhausen} proved that maximal correlation of several independent copies of a joint distribution equals to that of a single copy. This parameter has recently been revisited by several authors; see e.g.~\cite{KangUlukus, Kumar, Polyanskiy, KamathAnantharam}.

To study properties of $\mu(\rho_{AB})$ let us define
$$\widetilde \rho_{AB} = (I_A\otimes \rho_B^{-1/2}) \rho_{AB} (\rho_A^{-1/2}\otimes I_B),$$
where inverses of $\rho_A$ and $\rho_B$ are defined on their supports. Note that $\widetilde \rho_{AB}$ is not even hermitian, so is not a density matrix.

\begin{theorem}\label{thm:schmidt}
$\mu(\rho_{AB})$ is equal to the \emph{second} Schmidt coefficient of $\widetilde \rho_{AB}$ as a vector in the bipartite Hilbert space $\mathbf{L}(\mathcal{H}_A)\otimes \mathbf{L}(\mathcal{H}_B)$.
\end{theorem}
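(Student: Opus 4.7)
The plan is to absorb the reduced density matrices into the optimization variables, converting the constrained optimization into a statement about the operator Schmidt decomposition of $\widetilde\rho_{AB}$ viewed in the bipartite Hilbert space $\mathbf{L}(\mathcal{H}_A)\otimes \mathbf{L}(\mathcal{H}_B)$.

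First I would substitute $P_A = \rho_A^{1/2}X_A$ and $Q_B = \rho_B^{1/2}Y_B$. A short computation shows that the normalization conditions (\ref{eq:con2}) become $\|P_A\|_2=\|Q_B\|_2=1$, while the mean-zero conditions (\ref{eq:con1}) become $\langle \rho_A^{1/2},P_A\rangle=\langle \rho_B^{1/2},Q_B\rangle=0$. Using $\rho_{AB}=(I\otimes\rho_B^{1/2})\,\widetilde\rho_{AB}\,(\rho_A^{1/2}\otimes I)$, the objective rewrites as $|\tr(\widetilde\rho_{AB}(P_A\otimes Q_B^\dagger))|=|\langle P_A^\dagger\otimes Q_B,\widetilde\rho_{AB}\rangle|$. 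Because $\rho_A^{1/2}$ is self-adjoint, the map $P\mapsto P^\dagger$ preserves both the Hilbert--Schmidt norm and orthogonality to $\rho_A^{1/2}$, so after renaming one has
\[
\mu(\rho_{AB})=\max\bigl\{|\langle P\otimes Q,\widetilde\rho_{AB}\rangle|\,:\,\|P\|_2=\|Q\|_2=1,\ P\perp\rho_A^{1/2},\ Q\perp\rho_B^{1/2}\bigr\}.
\]

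Next, write the operator Schmidt decomposition $\widetilde\rho_{AB}=\sum_i s_i\,A_i\otimes B_i$ with $s_1\geq s_2\geq\cdots$ and $\{A_i\}$, $\{B_i\}$ orthonormal in $\mathbf{L}(\mathcal{H}_A)$ and $\mathbf{L}(\mathcal{H}_B)$. I would then argue that $s_1=1$ and that one may choose $A_1=\rho_A^{1/2}$, $B_1=\rho_B^{1/2}$. The lower bound $s_1\geq 1$ is obtained by evaluating the objective at $(P,Q)=(\rho_A^{1/2},\rho_B^{1/2})$, which gives $\langle \rho_A^{1/2}\otimes\rho_B^{1/2},\widetilde\rho_{AB}\rangle=\tr(\rho_{AB})=1$. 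For the upper bound, $s_1$ equals the unconstrained maximum of $|\langle P\otimes Q,\widetilde\rho_{AB}\rangle|$ over unit product vectors, which pulls back through the change of variables to $|\tr(\rho_{AB}(X\otimes Y^\dagger))|$ under the normalizations $\tr(\rho_A XX^\dagger)=\tr(\rho_B YY^\dagger)=1$; applying Cauchy--Schwarz to the positive semidefinite form $\langle U,V\rangle_\rho:=\tr(\rho_{AB}U^\dagger V)$ with $U=X^\dagger\otimes I$ and $V=I\otimes Y^\dagger$ yields the bound $\leq 1$. If the top singular value is degenerate, one can rotate the orthonormal basis within the top subspace to arrange $A_1=\rho_A^{1/2}$ and $B_1=\rho_B^{1/2}$.

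Once the top Schmidt pair has been aligned with $(\rho_A^{1/2},\rho_B^{1/2})$, the orthogonality constraints in the reduced problem become precisely $P\perp A_1$ and $Q\perp B_1$. Expanding the objective in the Schmidt basis gives $\langle P\otimes Q,\widetilde\rho_{AB}\rangle=\sum_{i\geq 2} s_i\,\langle P,A_i\rangle\langle Q,B_i\rangle$, and two applications of Cauchy--Schwarz together with $s_i\leq s_2$ for $i\geq 2$ bound this by $s_2$, with equality attained at $P=A_2$, $Q=B_2$. This concludes $\mu(\rho_{AB})=s_2$. The main obstacle I foresee is the clean identification $s_1=1$ with the specific Schmidt pair $(\rho_A^{1/2},\rho_B^{1/2})$; the rest is a change of variables followed by the standard variational characterization of the second singular value of a reshuffled matrix.
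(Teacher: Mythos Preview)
Your proposal is correct and follows essentially the same approach as the paper: the same change of variables (the paper uses $R_A=\rho_A^{1/2}X_A$, $S_B=Y_B^\dagger\rho_B^{1/2}$, which matches your $P_A$ and $Q_B^\dagger$), the same identification $s_1=1$ with Schmidt pair $(\rho_A^{1/2},\rho_B^{1/2})$ via Cauchy--Schwarz, and the same variational extraction of the second Schmidt coefficient. Your Cauchy--Schwarz for $s_1\leq 1$ via the semi-inner product $\langle U,V\rangle_\rho=\tr(\rho_{AB}U^\dagger V)$ is a clean repackaging of the paper's version, which instead splits $\rho_{AB}^{1/2}$ to each side and applies Hilbert--Schmidt Cauchy--Schwarz; the two are equivalent.
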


\begin{proof}  
Let $R_A=\rho_A^{1/2}X_A$ and $S_B = Y_B^{\dagger} \rho_B^{1/2}$. With this change of variables $\mu(\rho_{AB})$ is equivalently equal to 
\begin{align}
\mu(\rho_{AB}) = \max\,\,  &  |\tr(\widetilde \rho_{AB} R_A\otimes S_B) |\nonumber\\
&  \langle \rho_A^{1/2} ,  R_A\rangle = \langle \rho_B^{1/2}, S_B\rangle =0,\label{eq:1}\\
& \|R_A\|_2=\|S_B\|_2=1.\label{eq:2}
\end{align}

Let  
$$\widetilde \rho_{AB} = \sum_{i} \lambda_i M_{i}\otimes N_{i},$$
be the Schmidt decomposition of $\widetilde \rho_{AB}\in \mathbf{L}(\mathcal{H}_A)\otimes \mathbf{L}(\mathcal{H}_B)$ where $\lambda_1\geq \lambda_2\geq \cdots \geq 0$ are the Schmidt coefficients and $\{ M_i\}$ and $\{ N_i\}$ are orthonormal bases for $\mathbf{L}(\mathcal{H}_A)$ and $\mathbf{L}(\mathcal{H}_B)$, respectively. 
Note that
$$\lambda_1 = \max_{\|V\|_2=\|W\|_2=1}  \tr\left( \widetilde \rho_{AB} V_A\otimes W_B \right),  $$
and using Cauchy-Schwarz inequality we have
\begin{align*}
\lambda_1 & = \tr\left(   M_{1}^{\dagger}\otimes N_1^{\dagger} \widetilde \rho_{AB}  \right)\\
& = \tr\left[       \left(\rho_A^{-1/2}M_1^{\dagger}\right)\otimes \left(N_1^{\dagger}\rho_B^{-1/2}\right) \rho_{AB}   \right]\\
& = \tr\left[   \left(      \rho_{AB}^{1/2} \big(\rho_A^{-1/2}M_1^{\dagger}\otimes I_B\big)     \right) \left( \big(I_A\otimes N_1^{\dagger}\rho_B^{-1/2}\big) \rho_{AB}^{1/2}     \right)             \right]\\
%& = \left(\tr\left( \rho_{AB}^{1/2} (\rho_A^{-1/2}M_1^{\dagger}M_1\rho_A^{-1/2}\otimes I_B) \rho_{AB}^{1/2}         \right)\right)^{1/2} \left(\tr\left( \rho_{AB}^{1/2} ( I_A\otimes   \rho_B^{-1/2}N_1N_1^{\dagger}\rho_B^{-1/2}) \rho_{AB}^{1/2}         \right)\right)^{1/2}\\
& \leq \left[\tr\left( \rho_{AB} \big(\rho_A^{-1/2}M_1^{\dagger}M_1\rho_A^{-1/2}\otimes I_B\big)          \right)\right]^{1/2} \cdot \left[\tr\left( \rho_{AB} \big( I_A\otimes   \rho_B^{-1/2}N_1N_1^{\dagger}\rho_B^{-1/2}\big)          \right)\right]^{1/2}\\
& = \left[\tr\left( \rho_{A} \big( \rho_A^{-1/2}M_1^{\dagger}M_1\rho_A^{-1/2}  \big)        \right)\right]^{1/2} \cdot \left[\tr\left( \rho_{B}   \big(\rho_B^{-1/2}N_1N_1^{\dagger}\rho_B^{-1/2}  \big)        \right)\right]^{1/2}\\
& = \left[ \tr\big(   M_1^{\dagger}M_1  \big)  \right]^{1/2} \cdot \left[ \tr\big(   N_1N_1^{\dagger}   \big)  \right]^{1/2}\\
&= 1.
\end{align*}
On the other hand, observe that $\|\rho_A^{1/2}\|_2=\|\rho_B^{1/2}\|_2=1$ and $|\tr(\widetilde \rho_{AB} \rho_A^{1/2}\otimes \rho_B^{1/2})| =|\tr(\rho_{AB})|=1$. As a result, 
$$\lambda_1=1,$$
and we can take $M_1=\rho_A^{1/2}$ and $ N_1=\rho_B^{1/2}$.

Now for $R_A$ and $S_B$ satisfying~\eqref{eq:1} and~\eqref{eq:2} we have 
\begin{align*}
|\tr(\widetilde \rho_{AB} R_A\otimes S_B)|  & = \left|\sum_{i\geq 1} \lambda_i  \langle R^{\dagger}, M_i\rangle \langle S^{\dagger} , N_i  \rangle\right|\\
& = \left|\sum_{i\geq 2} \lambda_i  \langle R^{\dagger}, M_i\rangle \langle S^{\dagger} , N_i  \rangle\right|\\
&  \leq    \left(  \sum_{i\geq 2}     \lambda_i   | \langle R^{\dagger}, M_i\rangle  |^2    \right)^{1/2} \left(  \sum_{i\geq 2}     \lambda_i   | \langle S^{\dagger}, N_i\rangle  |^2    \right)^{1/2}  \\
&\leq \lambda_2,
\end{align*}
where in the last line we use $1=\|R\|_2^2 = \sum_{i\geq 2} | \langle R^{\dagger}, M_i\rangle  |^2$ and similarly 
$1= \sum_{i\geq 2}  | \langle S^{\dagger}, N_i\rangle  |^2 $.
These inequalities are tight for 
$R=M_2^{\dagger}$ and $S = N_2^{\dagger}$. We conclude that $\mu(\rho_{AB}) = \lambda_2$.

\end{proof}\\

Let us consider the special case where $A$ and $B$ are classical registers. If $\{\ket i: 1\leq i\leq d_A\}$ and $\{\ket k: 1\leq k\leq d_B\}$ are computational bases of $\mathcal{H}_A$ and $\mathcal{H}_B$ respectively, then $\rho_{AB}$ is diagonal in the basis $\{\ket i\ket k:  1\leq i\leq d_A, 1\leq k\leq d_B    \}$. Let us denote $p_{ik} = \bra i\bra k \rho_{AB}\ket i\ket k$, so we can think of a joint distribution $P_{AB}$ with marginals $P_A$ and $P_B$.
Then it is easy to see that 
\begin{align*}
\mu(P_{AB})   = \max\,\, & \mathbb E (f(i)g(k))\\
& \mathbb E(f(i)) = \mathbb E(g(k))=0,\\
& \mathbb E(f(i)^2) = \mathbb E(g(k)^2)=1,
\end{align*}
where the maximum is taken over all real functions $f$ and $g$ defined on $\{1, \dots, d_A\}$ and $\{1, \dots, d_B\}$ respectively, and $\mathbb E$ denotes the expectation value with respect to $P_{AB}$. 

Maximal correlation can be reformulated using Theorem~\ref{thm:schmidt}. Define 
$$\widetilde p_{ik} = p_i^{-1/2}p_k^{-1/2} p_{ik},$$
and let $\widetilde P_{AB}$ be a $d_A\times d_B$ matrix whose $ik$-th entry is $\widetilde p_{ik}$. It is easy to see that Schmidt coefficients of $\widetilde \rho_{AB}$ are in one-to-one correspondence with singular values of $\widetilde P_{AB}$. So in the classical case $\mu(P_{AB})$ is equal to the second singular value of $\widetilde P_{AB}$. For example if $P_{AB}$ denotes two perfectly correlated random variables, then $\widetilde P_{AB}$ is the identity matrix and $\mu(P_{AB})=1$. 
This latter formulation of maximal correlation in the classical case is found by Kang and Ulukus~\cite{KangUlukus} and Kumar~\cite{Kumar}.

\begin{theorem}\label{thm:ab} $\mu(\cdot)$ satisfies the following properties:
\begin{enumerate}
\item[{\rm(a)}] $\mu(\rho_{AB}\otimes \sigma_{A'B'}) = \max\{ \mu(\rho_{AB}), \mu(\sigma_{A'B'})  \}$.
\item[{\rm(b)}] Let $\Phi_{B}:\mathbf{L}(\mathcal{H}_B)\rightarrow \mathbf{L}(\mathcal{H}_{B'})$ be a completely positive trace-preserving super-operator. Let $\sigma_{AB'} = \mathcal{I}_A\otimes \Phi_B(\rho_{AB})$. Then
$\mu(  \sigma_{AB'}  )   \leq \mu(\rho_{AB})$.
\end{enumerate}
\end{theorem}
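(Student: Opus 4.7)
For part (a), my plan is to invoke Theorem~\ref{thm:schmidt} and reduce the statement to elementary Schmidt-coefficient bookkeeping. Writing $\tau=\rho_{AB}\otimes \sigma_{A'B'}$, the marginals of a product state factor as $\tau_{AA'}=\rho_A\otimes \sigma_{A'}$ and $\tau_{BB'}=\rho_B\otimes \sigma_{B'}$, so a direct substitution into the defining formula for $\widetilde\tau$ (together with a regrouping of the ``$A$-type'' and ``$B$-type'' tensor factors) yields $\widetilde\tau = \widetilde\rho_{AB}\otimes \widetilde\sigma_{A'B'}$. The Schmidt coefficients of a tensor product of two vectors in a bipartite Hilbert space are the pairwise products of their individual Schmidt coefficients. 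From the proof of Theorem~\ref{thm:schmidt} the top Schmidt coefficient of each of $\widetilde\rho$ and $\widetilde\sigma$ is $1$, realized by $\rho_A^{1/2}\otimes\rho_B^{1/2}$ and $\sigma_{A'}^{1/2}\otimes\sigma_{B'}^{1/2}$. The two largest pairwise products are therefore $1\cdot 1=1$ and $\max\{\mu(\rho_{AB}),\mu(\sigma_{A'B'})\}$, which is the desired equality. I expect no real obstacle here beyond careful tracking of which Hilbert space each factor lives in.

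For part (b) I would return to the original variational definition \eqref{eq:con1}--\eqref{eq:con2} rather than the Schmidt picture. The plan is: take any feasible pair $(X_A, Y_{B'})$ for $\sigma_{AB'}$ and pull $Y_{B'}$ back through the Heisenberg adjoint $\Phi^*$ to obtain a feasible pair for $\rho_{AB}$ with objective value at least as large. Define $Y'_B := \Phi^*(Y_{B'})$. Using $\sigma_A=\rho_A$ (so the constraints on $X_A$ transfer unchanged) and $\sigma_{B'}=\Phi(\rho_B)$, the zero-expectation constraint dualizes cleanly to $\tr(\rho_B Y'_B)=\tr(\sigma_{B'}Y_{B'})=0$, and the adjoint identity for $I_A\otimes \Phi_B$ preserves the objective: $\tr(\rho_{AB} X_A\otimes (Y'_B)^{\dagger}) = \tr(\sigma_{AB'} X_A\otimes Y_{B'}^{\dagger})$.

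The only step that does not follow from bare linear duality, and which I expect to be the main obstacle, is the variance constraint $\tr(\rho_B Y'_B(Y'_B)^{\dagger})\le 1$; this quantity is quadratic in $Y_{B'}$, so an operator Schwarz-type inequality is required. My plan is to invoke the Kadison--Schwarz inequality: because $\Phi$ is CPTP, its Heisenberg adjoint $\Phi^*$ is completely positive and unital, hence satisfies $\Phi^*(ZZ^{\dagger})\ge \Phi^*(Z)\Phi^*(Z)^{\dagger}$ for every $Z$. Applying this with $Z=Y_{B'}$ and tracing against $\rho_B$ gives $\tr(\rho_B Y'_B(Y'_B)^{\dagger})\le \tr(\sigma_{B'}Y_{B'}Y_{B'}^{\dagger})=1$. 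Rescaling $Y'_B$ by the (at-most-one) square root of this quantity only increases the objective, so I obtain a feasible pair for $\rho_{AB}$ with objective at least $|\tr(\sigma_{AB'} X_A\otimes Y_{B'}^{\dagger})|$, and taking the supremum over $(X_A, Y_{B'})$ finishes the proof.
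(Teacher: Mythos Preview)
Your proposal is correct and follows essentially the same route as the paper: part (a) via Theorem~\ref{thm:schmidt} and the multiplicativity of Schmidt coefficients under tensor products, and part (b) by pulling back the $B'$-observable through the unital adjoint $\Phi^*$ and controlling the variance via the Kadison--Schwarz inequality. The only noteworthy difference is that the paper proves the needed operator inequality $\Phi^*(YY^{\dagger})\ge \Phi^*(Y)\Phi^*(Y)^{\dagger}$ directly from $2$-positivity (via the $2\times 2$ block-matrix trick), thereby observing that complete positivity is not required.
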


\begin{proof}
(a) Let $\lambda_1=1\geq \lambda_2\geq \cdots$ and $\zeta_1=1\geq \zeta_2\geq \cdots$ be the Schmidt coefficients of $\widetilde \rho_{AB}$ and $\widetilde \sigma_{A'B'}$ respectively. By Theorem~\ref{thm:schmidt}, $\mu(\rho_{AB}\otimes \sigma_{A'B'})$ is equal to the second Schmidt coefficient of $\widetilde \rho_{AB}\otimes \widetilde\sigma_{A'B'}$ which is equal to 
$$\max \{ \lambda_1\zeta_2,  \zeta_1 \lambda_2  \}  = \max\{ \lambda_2, \zeta_2  \} = \max \{\mu(\rho_{AB}), \mu(\sigma_{A'B'})\}.$$

\noindent (b) Any completely positive trace-preserving map is a composition of an isometry and a partial trace. Local isometries obviously do not change $\mu(\rho_{AB})$. Moreover $\mu(\sigma_{AB}) \leq \mu(\sigma_{AA'B})$ is easy to prove. Here we present a proof for the case where $\Phi_B$ is only 2-positive and not necessarily completely positive.

Let $X_A$ and $Y_{B'}$ be the optimizers for $\sigma_{AB'}$ satisfying \eqref{eq:con1} and \eqref{eq:con2}. Let $\Phi^{*}$ be the adjoint of $\Phi$, i.e., 
$\tr(\Phi(M)N) = \tr(M \Phi^*(N))$. Note that $\Phi^*$ is 2-positive since $\Phi$ is 2-positive, and $\Phi^*(I)= I$ because $\Phi$ is trace-preserving. Define $Z=\Phi^{*}(Y)$. Observe that $\sigma_A=\rho_A$ and $\sigma_{B'}= \Phi(\rho_B)$. Thus $\tr(\rho_A X) = \tr(\sigma_A X) =0$  and $\tr( \rho_AXX^{\dagger}  ) = \tr(\sigma_AXX^{\dagger})=1.$
Moreover, 
$$\tr(\rho_B Z) =\tr(\rho_B \Phi^{*}(Y)) = \tr(\Phi(\rho_B)Y) = \tr(\sigma_BY) = 0,$$ 
and 
\begin{align*}
\tr(\rho_{AB}X\otimes Z^{\dagger}) & = \tr(\rho_{AB}X\otimes \Phi^*(Y^{\dagger}))\\
& = \tr(\mathcal{I}_A\otimes \Phi_B(\rho_{AB})X\otimes Y^{\dagger})\\
& = \tr(\sigma_{AB'} X\otimes Y^{\dagger})\\
& = \mu(\sigma_{AB'}),
\end{align*}
where we use the fact that both $\Phi$ and $\Phi^*$ are hermitian-preserving. Therefore, we conclude that $\mu(\rho_{AB})\geq \mu(\sigma_{AB'})$ if $\tr(\rho_BZZ^{\dagger})\leq 1$. To prove the latter, observe that 
\begin{align*}
\begin{pmatrix}
YY^{\dagger} & Y\\
Y^{\dagger} & I
\end{pmatrix} = \begin{pmatrix}
Y\\
I
\end{pmatrix}\begin{pmatrix}
Y^{\dagger} & I
\end{pmatrix}
\end{align*}
is positive semidefinite. Since $\Phi^{*}$ is 2-positive,
\begin{align*}
\begin{pmatrix}
\Phi^*(YY^{\dagger}) & \Phi^*(Y)\\
\Phi^*(Y^{\dagger}) & \Phi^*(I)
\end{pmatrix} = \begin{pmatrix}
\Phi^*(YY^{\dagger}) & \Phi^*(Y)\\
\Phi^*(Y^{\dagger}) & I
\end{pmatrix},
\end{align*}
is positive semidefinite. 
This means that $\Phi^*(YY^{\dagger}) \geq \Phi^{*}(Y)\Phi^*(Y^{\dagger})$. Now using $\rho_B\geq 0$ we have
\begin{align*}
\tr(\rho_BZZ^{\dagger}) & =  \tr(\rho_B \Phi^*(Y)\Phi^*(Y^{\dagger}))\\
& \leq \tr(\rho_B \Phi^*(YY^{\dagger})) \\
& = \tr(\Phi(\rho_B) YY^{\dagger} )\\
& = \tr(\sigma_B YY^{\dagger})\\
& =1.
\end{align*}
We are done.

\end{proof}\\

The following corollary is the main result of this paper and is a direct consequence of the above theorem.

\begin{corollary} \label{cor:1} Suppose that $\rho_{AB}^{\otimes n}$, for some $n$, can be locally transformed to $\sigma_{EF}$ (under completely positive trace-preserving super-operators). Then
$$\mu(\rho_{AB}) \geq \mu(\sigma_{EF}).$$
\end{corollary}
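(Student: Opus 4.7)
The plan is to combine both parts of Theorem~\ref{thm:ab} in a very short chain of inequalities. By hypothesis there exist local completely positive trace-preserving super-operators $\Phi_{A^n\to E}$ and $\Psi_{B^n\to F}$ with $(\Phi\otimes\Psi)(\rho_{AB}^{\otimes n})=\sigma_{EF}$.

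First I would use part (a) of Theorem~\ref{thm:ab} by induction on $n$. Since $\mu(\rho_{AB}\otimes\rho_{AB})=\max\{\mu(\rho_{AB}),\mu(\rho_{AB})\}=\mu(\rho_{AB})$, iterating $n-1$ times gives $\mu(\rho_{AB}^{\otimes n})=\mu(\rho_{AB})$. This is the key ``tensorization'' step that replaces the loose bound coming from additive quantities like mutual information.

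Next I would invoke part (b) twice. Applying it with the CPTP map $\Psi_{B^n\to F}$ on Bob's side yields
\[
\mu\bigl((\mathcal{I}_{A^n}\otimes\Psi)(\rho_{AB}^{\otimes n})\bigr)\leq \mu(\rho_{AB}^{\otimes n}).
\]
Then I would apply part (b) again on the $A$-side map $\Phi_{A^n\to E}$. Strictly speaking Theorem~\ref{thm:ab}(b) is written for a map on the $B$-system, but the definition of $\mu$ in \eqref{eq:con1}--\eqref{eq:con2} is manifestly symmetric under swapping $A\leftrightarrow B$ (exchange $X_A$ and $Y_B$ and take complex conjugates), so the same statement holds for a CPTP map acting on the $A$-system. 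This gives
\[
\mu(\sigma_{EF})=\mu\bigl((\Phi\otimes\Psi)(\rho_{AB}^{\otimes n})\bigr)\leq \mu\bigl((\mathcal{I}_{A^n}\otimes\Psi)(\rho_{AB}^{\otimes n})\bigr)\leq \mu(\rho_{AB}^{\otimes n})=\mu(\rho_{AB}).
\]

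There is no real obstacle — the corollary is essentially a repackaging of Theorem~\ref{thm:ab}. The only small point worth being explicit about is the symmetry remark that lets us invoke the data processing inequality on the $A$-side as well as the $B$-side; everything else is direct substitution.
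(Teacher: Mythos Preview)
Your argument is correct and is exactly the ``direct consequence'' the paper has in mind: tensorize via part~(a) and then apply the data-processing inequality of part~(b) on each side, using the evident $A\leftrightarrow B$ symmetry of the definition. The paper does not spell this out any further than you have.
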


\vspace{.15in}

The following example reveals the strength of this corollary. Let $\ket \psi_{AB} = \frac{1}{\sqrt{2}}(\ket{00} + \ket {11})$ be the Bell state on two qubits. Define 
$$\rho^{(p)}_{AB} = (1-p) \frac{I_{AB}}{4} + p\,\ket \psi\bra\psi_{AB},$$
where $0\leq p\leq 1$ and $I_{AB}/4$ is the maximally mixed state. Note that $\rho_A^{(p)} = I_A/2$ and $\rho_B^{(p)} = I_B/2$ for every $p$. Therefore,
\begin{align*}
\mu(\rho_{AB}^{(p)}) = \max \,\, & \tr(\rho_{AB}^{(p)}X\otimes Y^{\dagger})\\
& \tr(X) = \tr(Y)=0,\\
& \tr(XX^{\dagger}) = \tr(YY^{\dagger}) = 2.
\end{align*}
For $X$ and $Y$ satisfying the above equations we have 
\begin{align*}
|\tr(\rho_{AB}^{(p)}X\otimes Y^{\dagger})| & = \left|\frac{1-p}{4}\tr(X\otimes Y^{\dagger}) + p\bra \psi X\otimes Y^{\dagger} \ket \psi\right|\\
&=  \frac{p}{2}\, |\tr(X^TY^{\dagger})|\\
& \leq \frac{p}{2} \|X\|_2\cdot \|Y\|_2\\
& = p.
\end{align*}
This upper bound is achievable at $X=Y=\ket 0\bra 0 - \ket 1\bra 1$.  Therefore, 
$$\mu(\rho_{AB}^{(p)}) = p.$$
We conclude that entanglement cannot be distilled from $\rho_{AB}^{(p)}$ for $p< 1$ under local operations (but no communication) because for the maximally entangled state we have $\mu(\ket{\psi}\bra{\psi}_{AB})= \mu(\rho_{AB}^{(1)})  =1$. In fact even common randomness cannot be extracted from these states (under local operations) since for two perfectly correlated bits $U, V$, we have $\mu(P_{UV})=1$. Moreover, having infinitely may copies of $\rho_{AB}^{(p)}$ one cannot locally generate a single copy of $\rho_{AB}^{(q)}$ if $q>p$.

\section{Other Schmidt coefficients}

In this section we generalize the data processing inequality of the previous section for $\mu(\cdot)$ to other Schmidt coefficients of $\widetilde \rho_{AB}$. These new inequalities, however, do not hold in the $n$-letter case in the sense of  Corollary~\ref{cor:1}.

\begin{theorem} \label{thm:higher-coeff}
Let $1=\mu_1(\rho_{AB})\geq \mu(\rho_{AB})=\mu_2(\rho_{AB}) \geq \mu_3(\rho_{AB})\geq \cdots$ be Schmidt coefficients of $\widetilde\rho_{AB}$.
Let $\Phi_{B}:\mathbf{L}(\mathcal{H}_{B})\rightarrow \mathbf{L}(\mathcal{H}_{B'})$ be a completely positive trace-preserving super-operator and let $\sigma_{AB'} = \mathcal{I}_A\otimes \Phi_{B}(\rho_{AB})$. Then for every $i$ we have
$$\mu_i(\rho_{AB})\geq \mu_i(\sigma_{AB'}).$$
\end{theorem}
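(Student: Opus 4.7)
The plan is to identify $\widetilde{\sigma}_{AB'}$ as the image of $\widetilde{\rho}_{AB}$ under a linear map that acts only on the $B$-factor, show that this map is a Hilbert-Schmidt contraction, and then invoke the matrix inequality $\sigma_i(CD) \le \sigma_i(C)\,\|D\|$ to pass the contraction through to every Schmidt coefficient simultaneously.

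Concretely, starting from the identity $\rho_{AB} = (I_A \otimes \rho_B^{1/2})\,\widetilde{\rho}_{AB}\,(\rho_A^{1/2} \otimes I_B)$, applying $\mathcal{I}_A \otimes \Phi_B$ to both sides and then rescaling by $\sigma_A^{-1/2} = \rho_A^{-1/2}$ and $\sigma_{B'}^{-1/2}$ gives
$$\widetilde{\sigma}_{AB'} \;=\; (\mathcal{I}_A \otimes L)(\widetilde{\rho}_{AB}), \qquad L(N) \,:=\, \sigma_{B'}^{-1/2}\, \Phi_B\bigl(\rho_B^{1/2} N\bigr),$$
with inverses read on supports, exactly as in the definition of $\widetilde{\rho}_{AB}$.

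The crux is proving $\|L\|_{\mathrm{op}} \le 1$ in the HS norm. Its HS adjoint works out to $L^*(Y) = \rho_B^{1/2}\,\Phi_B^*(\sigma_{B'}^{-1/2} Y)$, and with $W := \sigma_{B'}^{-1/2} Y$ the desired bound $\|L^*(Y)\|_2 \le \|Y\|_2$ reduces to
$$\tr\!\bigl(\rho_B\,\Phi_B^*(W)\Phi_B^*(W)^{\dagger}\bigr) \;\le\; \tr\!\bigl(\rho_B\,\Phi_B^*(WW^{\dagger})\bigr),$$
which is precisely the Kadison-Schwarz inequality $\Phi_B^*(WW^{\dagger}) \ge \Phi_B^*(W)\Phi_B^*(W)^{\dagger}$ paired with $\rho_B \ge 0$---the Schur-complement step already carried out in the proof of Theorem~\ref{thm:ab}(b) and valid whenever $\Phi_B^*$ is $2$-positive and unital, i.e.\ whenever $\Phi_B$ is $2$-positive and trace-preserving.

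Once $\|L\| \le 1$ is in hand, writing $\widetilde{\rho}_{AB} = \sum_j \lambda_j M_j \otimes N_j$ with $\{M_j\}$ HS-orthonormal and fixing any HS-orthonormal basis on $\mathbf{L}(\mathcal{H}_{B'})$, the matrix of $\widetilde{\sigma}_{AB'}$ in these bases equals $D \cdot L^{T}$, where $D = \operatorname{diag}(\lambda_1,\lambda_2,\dots)$ and $L$ is the matrix of the map $L$; the Horn-Weyl inequality immediately gives $\mu_i(\sigma_{AB'}) = \sigma_i(D L^T) \le \lambda_i\,\|L\| \le \mu_i(\rho_{AB})$. I expect the HS-contraction step to be the main obstacle, while the rest is bookkeeping and a standard singular-value inequality; the saving grace is that the very same Kadison-Schwarz calculation was already used in Theorem~\ref{thm:ab}(b) for the particular vector realizing the second Schmidt coefficient---the new content here is merely to recognize that this one estimate bounds $\|L\|$ and therefore every Schmidt coefficient at once.
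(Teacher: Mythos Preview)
Your proposal is correct and follows the same overall architecture as the paper: factor $\widetilde\sigma_{AB'}$ through $\widetilde\rho_{AB}$ via a one-sided map $L=\Psi$, $\Psi(Y)=\sigma_{B'}^{-1/2}\Phi(\rho_B^{1/2}Y)$, then use the singular-value inequality $\sigma_i(CD)\le\|C\|\,\sigma_i(D)$ together with $\|\Psi\|\le 1$. Your vector-space formulation $(\mathcal I_A\otimes L)(\widetilde\rho_{AB})=\widetilde\sigma_{AB'}$ and the paper's super-operator formulation $\Omega_{\widetilde\sigma}=\Psi\circ\Omega_{\widetilde\rho}$ are the same statement in two isomorphic pictures.

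The one genuine difference is how $\|\Psi\|\le 1$ is established. You pass to the adjoint $L^*(Y)=\rho_B^{1/2}\Phi^*(\sigma_{B'}^{-1/2}Y)$, substitute $W=\sigma_{B'}^{-1/2}Y$, and reduce $\|L^*(Y)\|_2\le\|Y\|_2$ to $\tr(\rho_B\,\Phi^*(W)\Phi^*(W)^\dagger)\le\tr(\rho_B\,\Phi^*(WW^\dagger))$, which is exactly the Kadison--Schwarz step already carried out in Theorem~\ref{thm:ab}(b). The paper instead builds the auxiliary density matrix $\tau_{BB'}=\sum_{k,l}\ket k\bra l\otimes\Phi(\rho_B^{1/2}\ket k\bra l\rho_B^{1/2})$, identifies $\Psi(X^T)=\Omega_{\widetilde\tau}(X)$, and concludes $\|\Psi\|=\mu_1(\tau_{BB'})=1$ by invoking Theorem~\ref{thm:schmidt}. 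Your route is shorter and recycles Theorem~\ref{thm:ab}(b) verbatim; the paper's route is a bit more structural, exhibiting $\Psi$ itself as an $\Omega_{\widetilde\tau}$ for a bona fide state and thereby tying the bound back to the same ``first Schmidt coefficient equals $1$'' phenomenon. Either argument only needs $\Phi$ to be $2$-positive and trace-preserving.
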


To prove this theorem it is more convenient to use the isomorphism between the Hilbert spaces $\mathcal{V}\otimes \mathcal{W}$ and $\mathbf{L}(\mathcal{V}, \mathcal{W})$, and the fact that Schmidt coefficients are mapped to singular values under this isomorphism. To be more precise, let us fix an orthonormal basis $\{\ket 0, \dots, \ket{d-1}\}$ for $\mathcal{H}_A$. Then for every $Z_{AB}\in \mathbf{L}(\mathcal{H}_A)\otimes \mathbf{L}(\mathcal{H}_B)$ there exists a super-operator 
$\Omega_Z: \mathbf{L}(\mathcal{H}_A)\rightarrow \mathbf{L}(\mathcal{H}_B)$ such that 
$$ Z_{AB} = \sum_{i,j=0}^{d-1}  \ket i\bra j_A\otimes \Omega_Z(\ket j\bra i)_B.$$
Using the fact that $\{\ket i\bra j:  i,j=0,\dots, d-1\}$ is an orthonormal basis for $\mathbf{L}(\mathcal{H}_A)$ it is easy to see that Schmidt coefficients of $Z_{AB}$ are equal to singular values of $\Omega_Z$.

By the above notation we may consider super-operators $\Omega_{\rho}$ and $\Omega_{\widetilde \rho}$. Observe that 
\begin{align*}
\sum_{i,j} \ket i\bra j \otimes \rho_{B}^{-1/2} \Omega_{\rho}(\rho_A^{-1/2} \ket j\bra i) & = \sum_{i,j, k} \ket i\bra j \otimes \rho_B^{-1/2} \Omega_\rho(\ket k\bra k \rho_A^{-1/2} \ket j\bra i)\\
& = \sum_{i,j, k} \ket i\bra k \rho_A^{-1/2} \ket j\bra j \otimes \rho_B^{-1/2} \Omega_\rho(\ket k\bra i)\\
& = \sum_{i, k} \ket i\bra k \rho_A^{-1/2} \otimes \rho_B^{-1/2} \Omega_\rho(\ket k\bra i)\\
&= (I_A\otimes \rho_B^{-1/2})   \left( \sum_{i,k} \ket i\bra k\otimes \Omega_{\rho}(\ket k\bra i)       \right)    (\rho_A^{-1/2}\otimes I_A)\\
&=(I_A\otimes \rho_B^{-1/2})   \rho_{AB}    (\rho_A^{-1/2}\otimes I_A)\\
& = \widetilde \rho_{AB}.
\end{align*}
Therefore, by definition we have 
\begin{align}\label{eq:omega-tilde}
\Omega_{\widetilde\rho}(X) = \rho_B^{-1/2}\Omega_\rho(\rho_A^{-1/2}X).
\end{align}

We are now ready to prove Theorem~\ref{thm:higher-coeff}.\\

\begin{proof}
From the definitions it is clear that $\Omega_\sigma= \Phi\circ \Omega_\rho$, and then from~\eqref{eq:omega-tilde} and $\sigma_A=\rho_A$ we have
\begin{align*}
\Omega_{\widetilde \sigma}(X) & = \sigma_{B'}^{-1/2} \Omega_{\sigma} (\sigma_A^{-1/2} X)\\
& = \sigma_{B'}^{-1/2} \Phi \left(\Omega_{\rho} (\rho_A^{-1/2} X)\right)\\
&=\sigma_{B'}^{-1/2} \Phi\left(  \rho_B^{1/2} \Omega_{\widetilde \rho}(X)    \right). 
\end{align*}
This means that if we define $\Psi:\mathbf{L}(\mathcal{H}_B)\rightarrow \mathbf{L}(\mathcal{H}_{B'})$ by $\Psi(Y) = \sigma_{B'}^{-1/2} \Phi(\rho_B^{1/2}Y)$ then 
$$\Omega_{\widetilde \sigma} = \Psi\circ \Omega_{\widetilde \rho}.$$
Thus given the correspondence between singular values and Schmidt coefficients we conclude that
$$\mu_i(\sigma_{AB'}) \leq \|\Psi\|\cdot \mu_i(\rho_{AB}),$$
where $\|\Psi\| = \|\Psi\|_{\infty}$ denotes the operator norm of $\Psi$, and we use (for example) Problem III.6.2 of~\cite{Bhatia-M}. Thus it suffices to show that $\|\Psi\|\leq 1$.

Fix an orthonormal basis $\{\ket 0, \ket 1, \dots, \ket{d'-1}\}$ for $\mathcal{H}_B$ and define 
$$\tau_{BB'} = \sum_{k,l=0}^{d-1} \ket k\bra l\otimes \Phi\left(   \rho_B^{1/2}\ket k\bra l\rho_B^{1/2}     \right).$$
It is easy to verify that $\tau_{BB'}$ is a density matrix with marginals $\tau_{B'}=\Phi(\rho_B)=\sigma_B$ and $\tau_B= \rho_B^{\ast}$ where by $\rho_B^{*}$ we mean the entry-wise complex conjugate of $\rho_B$ (with respect to the chosen basis). Moreover, we have $\Omega_{\tau}(X) =  \sigma_B^{-1/2}\Phi(\rho_B^{1/2} X^T)$ and then $\Omega_{\widetilde \tau} (X) = \Psi(X^T)$. As a result, $\|\Psi\| = \|\Omega_{\widetilde \tau}\|$ which we know is equal to the maximum Schmidt coefficient of $\widetilde \tau_{BB'}$ which is $1$.

\end{proof}

\section{Extreme values}

In this section we study the extreme values of maximal correlation.

%\begin{lemma}\label{lem:hermitian}
%The optimizers $X_A$ and $Y_B$ in the definition of $\mu(\rho_{AB})$ can be chosen such $\rho_AX_A$ and $Y$ hermitian.
%\end{lemma}

%\begin{proof}
%By restricting the local Hilbert spaces $\mathcal{H}_A$ and $\mathcal{H}_B$ to the supports of $\rho_A$ and $\rho_B$ we may assume that they are invertible. Let $\widetilde{\mathbf{L}}_A\subseteq \mathbf{L}(\mathcal{H}_A)$ be the space of operators $V$ such that $\rho_A^{-1/2}V$ is hermitian. Moreover, let $\widetilde{\mathbf{L}}_B\subseteq \mathbf{L}(\mathcal{H}_B)$ the the set of operators $W$ such that $W\rho_B^{-1/2}$ is hermitian. Then $\widetilde{\mathbf{L}}_A$  and $\widetilde{\mathbf{L}}_B$ are subspaces of $\mathbf{L}(\mathcal{H}_A)$ and $\mathbf{L}(\mathcal{H}_B)$, respectively, as \emph{real} vector spaces. Observe that $\widetilde{\rho}_{AB}$ is in $\widetilde{\mathbf{L}}_A\otimes \widetilde{\mathbf{L}}_B$. Thus the vectors in the Schmidt decomposition of $\widetilde{\rho}_{AB}$ can be chosen in $\widetilde{\mathbf{L}}_A$ and $\widetilde{\mathbf{L}}_B$. In fact, we may assume that the optimizer $R_A^\dagger=X_A^\dagger \rho_A^{1/2} $ and $S_B^\dagger= \rho_B^{1/2}Y_B$ defined in the proof of Theorem~\ref{thm:schmidt} belong to $ \widetilde{\mathbf{L}}_A$ and $ \widetilde{\mathbf{L}}_B$, respectively. This equivalently means that  $X_A$ and $Y_B$ can be chosen to be hermitian.
%\end{proof}

\begin{theorem}\label{thm:extreme}
$0\leq \mu(\rho_{AB})\leq 1$ and the followings hold:
\begin{itemize} 
\item[\rm{(a)}] $\mu(\rho_{AB})=0$ if and only if $\rho_{AB}= \rho_A\otimes \rho_B$, i.e., $\rho_{AB}$ contains neither classical nor quantum correlation.
\item[\rm{(b)}] $\mu(\rho_{AB})=1$ if and only if there exist nontrivial local operators $X_A, Y_B$ such that $\rho_{AB}(X_A\otimes I_B) = \rho_{AB}(I_A\otimes Y_B)$. Furthermore, if $X_A, Y_B$ are hermitian, then there are local measurements $\{M_A, I_A-M_A\}$ and $\{N_B, I_B-N_B\}$ such that $\tr\left(\rho_{AB} M_A\otimes N_B\right)\neq 0,1$, and
$$\tr\left(\rho_{AB} (M_A\otimes (I_B- N_B))\right) = \tr\left( \rho_{AB} ((I_A- M_A) \otimes N_B)  \right) =0.$$

\end{itemize}
\end{theorem}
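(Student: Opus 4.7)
The bounds $0\le\mu(\rho_{AB})\le 1$ come at no cost: nonnegativity is immediate from the definition as a maximum of absolute values, and $\mu(\rho_{AB})=\lambda_2\le\lambda_1=1$ was already shown inside the proof of Theorem~\ref{thm:schmidt}. For part~(a), saturating $\lambda_2=0$ collapses the Schmidt expansion of $\widetilde\rho_{AB}$ to the single surviving term $M_1\otimes N_1=\rho_A^{1/2}\otimes\rho_B^{1/2}$; inverting the substitution $\widetilde\rho_{AB}=(I_A\otimes\rho_B^{-1/2})\rho_{AB}(\rho_A^{-1/2}\otimes I_B)$ on the supports then gives $\rho_{AB}=\rho_A\otimes\rho_B$. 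The converse is immediate since the expectation in the definition of $\mu$ factorizes into two zero-mean factors.

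For the ``if'' direction of part~(b) the cleanest route is through Theorem~\ref{thm:ab}(b). Performing the two binary measurements locally is a pair of CPTP channels producing classical bits $U,V\in\{0,1\}$; the two vanishing hypotheses say $\Pr(U\neq V)=0$, while $\tr(\rho_{AB}M_A\otimes N_B)\in(0,1)$ says $\Pr(U=1)\in(0,1)$. Thus $(U,V)$ is a nontrivial common bit with $\mu(P_{UV})=1$ (take $f(u)=g(u)=(u-p)/\sqrt{p(1-p)}$ in the classical formula derived after Theorem~\ref{thm:schmidt}), and two applications of Theorem~\ref{thm:ab}(b) yield $\mu(\rho_{AB})\ge\mu(P_{UV})=1$.

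The hard direction recovers measurements from $\mu(\rho_{AB})=1$. By Lemma~\ref{lem:hermitian} one may fix hermitian optimizers $X_A,Y_B$, and after a sign flip on $Y_B$ assume $\tr(\rho_{AB}X_A\otimes Y_B)=1$. Writing this as the Hilbert--Schmidt inner product $\langle (X_A\otimes I_B)\rho_{AB}^{1/2},(I_A\otimes Y_B)\rho_{AB}^{1/2}\rangle$ and applying Cauchy--Schwarz with $\|(X_A\otimes I_B)\rho_{AB}^{1/2}\|_2^2=\tr(\rho_AX^2)=1$ (and likewise for $Y_B$), the equality case forces
\begin{equation*}
(X_A\otimes I_B)\,\rho_{AB}^{1/2} \;=\; (I_A\otimes Y_B)\,\rho_{AB}^{1/2}.
\end{equation*}
Spectrally decomposing $X_A=\sum_a a\,P_a$ and $Y_B=\sum_b b\,Q_b$ and multiplying this identity on the left by $P_a\otimes Q_b$ yields $(a-b)(P_a\otimes Q_b)\rho_{AB}^{1/2}=0$, so $\tr(\rho_{AB}(P_a\otimes Q_b))=0$ whenever $a\neq b$. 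Because $\tr(\rho_AX)=0$ and $\tr(\rho_AX^2)=1$ rule out $X_A$ having a single eigenvalue on $\mathrm{supp}(\rho_A)$, one may choose a threshold $t$ strictly between two distinct eigenvalues there and set $M_A=\sum_{a\ge t}P_a$, $N_B=\sum_{b\ge t}Q_b$. Every summand in $\tr(\rho_{AB}(M_A\otimes(I_B-N_B)))$ and in $\tr(\rho_{AB}((I_A-M_A)\otimes N_B))$ has $a\neq b$, so both vanish; and $\tr(\rho_{AB}M_A\otimes N_B)=\tr(\rho_AM_A)\in(0,1)$ by the choice of $t$.

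I expect the main obstacle to be the equality analysis of Cauchy--Schwarz in the Hilbert--Schmidt norm together with the bookkeeping when $\rho_A,\rho_B$ are not full rank: Lemma~\ref{lem:hermitian} is what lets us restrict to the supports at the outset and guarantees hermitian $X_A,Y_B$ so that their spectral decompositions and thresholding are unambiguous. Once these are in place the remaining steps are a rewriting of Theorem~\ref{thm:schmidt} and a routine spectral argument.
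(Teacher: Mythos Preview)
Your argument is correct and, for the bounds, part~(a), and the ``if'' half of~(b), essentially identical to the paper's. The interesting deviation is in the ``only if'' half of~(b). Both you and the paper first arrive at the key identity $(X_A\otimes I_B-I_A\otimes Y_B)\rho_{AB}^{1/2}=0$: you phrase it as the equality case of Cauchy--Schwarz, the paper computes $\tr(\rho_{AB}ZZ^\dagger)=0$ directly, but these are the same calculation. From there the paper applies polynomial functional calculus, taking $M_A=q(X_A)$ to be a spectral projection and $N_B=q(Y_B)$; since $N_B$ is then merely hermitian, the paper must first replace it by $N_B^2$ to get positivity and then run a separate limiting argument on $\tr(\rho_BN_B^n)$ to obtain $N_B\le I_B$. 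Your thresholding construction sidesteps all of that: by choosing the same cut $t$ on both spectra and using the observation that $\tr(\rho_{AB}P_a\otimes Q_b)=0$ whenever $a\neq b$, your $N_B$ is automatically an orthogonal projection, so the measurement validity is free and the off-diagonal vanishing is immediate. The paper's route makes the polynomial-in-$X_A$ structure explicit, which is conceptually nice, but your version is shorter and avoids the eigenvalue-bounding detour.
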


\begin{proof}  $\mu(\rho_{AB})\geq 0$ is clear from the definition and $\mu(\rho_{AB})=\lambda_2\leq \lambda_1=1$ follows from the proof of Theorem~\ref{thm:schmidt}. 

(a) $\mu(\rho_{AB})=0$ if and only if all Schmidt coefficients of $\widetilde \rho_{AB} $ except the first one ($\lambda_1=1$) are zero, which means that 
$$\widetilde \rho_{AB} = M_1\otimes N_1 = \rho_A^{1/2}\otimes \rho_B^{1/2},$$ 
or equivalently $\rho_{AB}=\rho_A\otimes \rho_B$. 

(b) Suppose that $\mu(\rho_{AB})=1$, so there exist $X_A$ and $Y_B$ satisfying~\eqref{eq:con1} and~\eqref{eq:con2}, and $\tr(\rho_{AB}X_A\otimes Y_B^{\dagger})=1$. We assume without loss of generality that $\rho_A$ and $\rho_B$ are invertible.
Define $Z_{AB} = X_A\otimes I_B - I_A\otimes Y_B$. Observe that
\begin{align*}
\tr(\rho_{AB} ZZ^{\dagger} ) 
& = \tr(\rho_{AB}XX^{\dagger}\otimes I_B) + \tr(\rho_{AB}  I_A\otimes YY^{\dagger})  - \tr(\rho_{AB} X\otimes Y^{\dagger} )- \tr(\rho_{AB} X^{\dagger}\otimes Y )\\
& = \tr(\rho_A XX^{\dagger}) + \tr(\rho_B YY^{\dagger})  - 2\\
&= 0. 
\end{align*}
Since both $\rho_{AB}$ and $ZZ^{\dagger}$ are positive semidefinite we conclude that $\rho_{AB}ZZ^{\dagger}=0$ and in fact
$\rho_{AB}Z=0$. Equivalently, we obtain 
\begin{align}\label{eq:x-to-y}
\rho_{AB}(X_A\otimes I_B) = \rho_{AB}(I_A\otimes Y_B).
\end{align} 

Conversely, if nontrivial operators $X_A, Y_B$ satisfying the above equation exists, then shifting them with identity we can assume that $\tr(\rho_A X_A)=\tr(\rho_{AB} (X_A\otimes I_B)) = \tr(\rho_{AB}(I_A\otimes Y_B)) = \tr(\rho_BY_B)=0$. Moreover, by scaling them we can assume that $\tr(\rho_A X_AX_A^\dagger) =1$ which implies
\begin{align*}
\tr(\rho_B Y_BY_B^\dagger) &= \tr\big( \rho_{AB}  (I_A\otimes Y_B)(I_A\otimes Y_B^\dagger)\big)\\
& = \tr\big( (I_A\otimes Y_B^\dagger)\rho_{AB}  (I_A\otimes Y_B)\big)\\
&= \tr\big( (I_A\otimes Y_B^\dagger)\rho_{AB}  (X_A\otimes I_B)\big)\\
&= \tr\big( (X^\dagger_A\otimes I_B)\rho_{AB}  (X_A\otimes I_B)\big)\\
& = \tr(\rho_A X_AX^\dagger_A)\\
& = 1,
\end{align*}
where here we use the fact that~\eqref{eq:x-to-y} implies  $(X_A^\dagger\otimes I_B)\rho_{AB} = (I_A\otimes Y^\dagger_B)\rho_{AB}$. Thus, $X_A, Y_B$ satisfy~\eqref{eq:con1} and~\eqref{eq:con2}, and we have 
$$\tr(\rho_{AB} X_A\otimes Y_B^\dagger) = \tr\big(\rho_{AB} (X_A\otimes I_B)(I_A\otimes Y_B^\dagger)\big) = \tr\big(\rho_{AB}  (I_A \otimes Y_BY_B^\dagger)\big)=\tr(\rho_BY_BY_B^\dagger) =1.$$

Suppose further that $X_A, Y_B$ are hermitian. Observe that
\begin{align*}
\rho_{AB}(X_A^2\otimes I_B) & = \rho_{AB}(X_A\otimes I_B) (X_A\otimes I_B) \\
& = \rho_{AB}(I_A\otimes Y_B) (X_A\otimes I_B) \\
& = \rho_{AB} (X_A\otimes I_B) (I_A\otimes Y_B)\\
& = \rho_{AB} (I_A\otimes Y_B)(I_A\otimes Y_B)\\
& = \rho_{AB} (I_A\otimes Y_B^2).
\end{align*}
More generally, for every polynomial $q(t)$ we have $\rho_{AB}(q(X_A)\otimes I_B) = \rho_{AB}(I_A\otimes q(Y_B))$. 
Using~\eqref{eq:con1}, $X_A$ is not a multiple of identity and has a non-trivial eigenspace. On the other hand orthogonal projections on eigenspaces of a hermitian operator can be written as polynomials in terms of that operator with \emph{real} coefficients. Therefore,  there exists a non-zero orthogonal projection $q(X_A)=M_A\neq I_A$ and a hermitian operator $q(Y_B)=N_B$ such that 
$$\rho_{AB}(M_A\otimes I_B) = \rho_{AB}(I_B\otimes N_B).$$ 
Replacing $N_A$ with $N_A^2$, we may assume that $N_B$ is positive semidefinite because 
$$\rho_{AB}(I_B\otimes N_A^2) = \rho_{AB}(M_A^2\otimes I_B) = \rho_{AB}(M_A\otimes I_B).$$
Note that
$$\tr(\rho_B N_B^n) = \tr(\rho_{AB} (I_A\otimes N_B^n)) = \tr(\rho_{AB}(M_A\otimes I_B))= \tr(\rho_AM_A).$$
Moreover, since $\rho_A$ is full-rank and $M_A$ is a non-trivial projection, $0< \tr(\rho_A M_A) < 1$. Now if $N_A$ has an eigenvalue greater than $1$, since $\rho_B$ is full-rank, $\tr(\rho_B N_B^n)$ would tend to infinity as $n$ goes to infinity. We conclude that all eigenvalues of $N_B$ are less than or equal to $1$ and $N_B\leq I_B$. 

Consider the local measurements $\{M_A, I_A-M_A\}$ and $\{N_B, I_B-N_B\}$ to be applied on $\rho_{AB}$. The probability of obtaining $M_A$ and $I_B-N_B$ is equal to 
\begin{align*}
\tr\left(\rho_{AB} (M_A\otimes (I_B-N_B))\right) & = \tr(\rho_{AB} (M_A\otimes I_B)) - \tr(\rho_{AB}(M_A\otimes N_B)) \\
& = \tr(\rho_{AB} (M_A\otimes I_B)) - \tr\left(\rho_{AB} (I_A\otimes N_B) (M_A\otimes I_B)    \right)\\
& = \tr(\rho_{AB} (M_A\otimes I_B)) - \tr\left(\rho_{AB} (M_A\otimes I_B) (M_A\otimes I_B)    \right)\\
& = \tr(\rho_{AB} (M_A\otimes I_B)) - \tr\left(\rho_{AB} (M_A\otimes I_B)    \right)\\
& = 0.
\end{align*}
Similarly we have $\tr\left( \rho_{AB}((I_A-M_A)\otimes N_B)   \right) =0$. We have $\tr(\rho_{AB}M_A\otimes N_B)\neq 0, 1$ because 
$$\tr(\rho_{AB}M_A\otimes N_B)  = \tr(\rho_{AB} M_A^2\otimes I_B) = \tr(\rho_AM_A)$$
is strictly between $0$ and $1$.

\end{proof}\\

%Part (b) of this theorem states that if $\mu(\rho_{AB})=1$, then some common data can be extracted from $\rho_{AB}$ by local measurements. That is, two parties who has shared $\rho_{AB}$ can apply two local binary measurements whose outcomes are non-trivial and perfectly correlated. It is not hard to see that the converse of this statement also holds; If one bit of common data can be extracted from $\rho_{AB}$, then $\mu(\rho_{AB})$ is at least as large as that of two perfectly correlated bits, which is $1$. Now the question is what happens if common data can be distilled from $\rho_{AB}$ in the sense of asymptoticly vanishing probability of error?  To address this question more precisely we start by a definition. 

\begin{remark}
When the marginal states $\rho_A, \rho_B$ are maximally mixed, the optimal operators $X_A, Y_B$ can be chosen to be symmetric. The point is that in this case $\widetilde \rho_{AB}$ is proportional to $\rho_{AB}$ that is hermitian and belongs to the tensor product of spaces of hermitian operators acting on subsystems $A$ and $B$ as \emph{real} Hilbert spaces.  Therefore, in this case the Schmidt decomposition of $\widetilde \rho_{AB}$ consists of Hermitian operators and following the proof of Theorem~\ref{thm:schmidt} it can be verified that the optimal operators $X_A, Y_B$ in the definition of quantum maximal correlation can be chosen to be hermitian.   
\end{remark}

\medskip

\begin{example}\label{example:hermitian} 
In this example we show that optimal $X_A, Y_B$ are not hermitian in general. Let 
$$\rho_{AB} = \frac{1}{\sqrt 2} \big(\ket{00}\bra{00} + \ket{++}\bra{++}\big),$$
where $\ket + = \frac{1}{\sqrt 2} (\ket 0+\ket 1)$. Let $X=Y=\ket 0\bra 0 -2 \ket 0\bra 1 - \ket 1\bra 1$. Then, it is not hard to verify that $\tr(\rho_A X_A) = \tr(\rho_B Y_A) =0$, $\tr(\rho_A X_A^\dagger X_A)= \tr(\rho_B Y_B^\dagger Y_B)=1$ and $\tr(\rho_{AB} X_A\otimes  Y^\dagger_B)=1$. Therefore, $\mu(A, B)=1$. Note that $X_A, Y_B$ are not hermitian and no other such hermitian operators exist that satisfy the above equations. To prove the latter claim, if such hermitian operators exists, then there are local measurements $\{M_A, I_A-M_A\}$ and $\{N_B, I_B-N_B\}$ that satisfy part (b) of Theorem~\ref{thm:extreme}. Nevertheless, it is not hard to verify that such local measurements do not exist \footnote{This example is shared with us by Saleh Rahimi-Keshari.}.

\end{example}

\medskip

We say that $\rho_{AB}$ has a common data (in the asymptotic sense) if for every $\epsilon>0$ there exists $n$ and local measurements $\{M_0, M_1=I-M_0\}$ and $\{N_0, N_1=I-N_0\}$ with outcomes $U, V$ such that 
$$\Pr(U\neq V)=\tr(\rho_{AB}^{\otimes n} M_0\otimes N_1) + \tr(\rho_{AB}^{\otimes n} M_1\otimes N_0)\leq \epsilon,$$
and $\Pr(U=0, V=0), \Pr(U=1, V=1)\geq c$ where $c>0$ is a constant independent of $n$ and $\epsilon$.

In the classical case a joint distribution $P_{UV}$ has a common data if it is \emph{decomposable}. Decomposability means that $\mathcal U$ and $\mathcal V$, the ranges (set of alphabets) of $U, V$, can be decomposed as \emph{disjoint} unions $\mathcal U = \mathcal U_0 \cup \mathcal U_1$ and $\mathcal V = \mathcal V_0 \cup \mathcal V_1$ such that 
$\Pr [ \mathcal U_i  \times \mathcal V_j]$
is equal to zero if $i\neq j$ and is positive otherwise. It is shown in~\cite{Witsenhausen} that decomposability is equivalent to having a common data even in the asymptotic sense, and that both of these are equivalent to $\mu(P_{UV})=1$. Here we attempt to generalize this to the quantum case.

\begin{lemma}\label{lem:contin} Let $U, V$ be two binary random variables such that $p_{01}, p_{10}\leq \epsilon$. Then 
$$\mu(P_{UV}) \geq 1 - \frac{\epsilon}{p_{00}p_{11}} - \frac{2\epsilon^2}{p_{00}p_{11}}.
$$
\end{lemma}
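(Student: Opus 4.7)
The plan is to exploit the fact that in the classical case with binary $U$ and $V$, the matrix $\widetilde P_{UV}$ whose second singular value equals $\mu(P_{UV})$ is just a $2\times 2$ matrix, so its second singular value is simply the absolute value of its determinant (since its top singular value equals $1$, the product of the two singular values equals $|\det \widetilde P_{UV}|$). This reduces the problem to an elementary inequality about four numbers $p_{00},p_{01},p_{10},p_{11}$.

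More concretely, I would first recall from the discussion after Theorem~\ref{thm:schmidt} that in the classical case
\begin{equation*}
\widetilde p_{ij}=\frac{p_{ij}}{\sqrt{p_{U=i}\,p_{V=j}}}, \qquad \mu(P_{UV})=\sigma_2(\widetilde P_{UV}).
\end{equation*}
Since $\widetilde P_{UV}$ is a $2\times 2$ matrix with top singular value $1$, its second singular value equals
\begin{equation*}
\mu(P_{UV})=\bigl|\det \widetilde P_{UV}\bigr|=\frac{|p_{00}p_{11}-p_{01}p_{10}|}{\sqrt{p_{U=0}p_{U=1}p_{V=0}p_{V=1}}}.
\end{equation*}

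Next, I would use the hypothesis $p_{01},p_{10}\leq \epsilon$ to bound numerator and denominator. For the numerator, assuming the right-hand side of the claimed inequality is non-negative (otherwise it is trivial) so in particular $p_{00}p_{11}\geq \epsilon^2$, the quantity $p_{00}p_{11}-p_{01}p_{10}$ is at least $p_{00}p_{11}-\epsilon^2$. For the denominator, the marginals satisfy $p_{U=0}=p_{00}+p_{01}\leq p_{00}+\epsilon$ and similarly for the other three, so the denominator is at most $(p_{00}+\epsilon)(p_{11}+\epsilon)$. Combining these gives
\begin{equation*}
\mu(P_{UV}) \geq \frac{p_{00}p_{11}-\epsilon^2}{(p_{00}+\epsilon)(p_{11}+\epsilon)} = 1-\frac{\epsilon(p_{00}+p_{11})+2\epsilon^2}{(p_{00}+\epsilon)(p_{11}+\epsilon)}.
\end{equation*}

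Finally, I would simplify by noting that the denominator of the subtracted fraction is at least $p_{00}p_{11}$, and that $p_{00}+p_{11}\leq 1$ because these are two entries of a probability distribution. Together these yield
\begin{equation*}
\mu(P_{UV}) \geq 1-\frac{\epsilon}{p_{00}p_{11}}-\frac{2\epsilon^2}{p_{00}p_{11}},
\end{equation*}
as desired. There is no real obstacle here; the only thing that needs care is the special-case handling when $p_{00}p_{11}<\epsilon^2$, where the right-hand side is automatically negative or small and the bound is vacuous.
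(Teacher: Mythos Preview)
Your proof is correct and follows essentially the same route as the paper: both use that $\mu(P_{UV})=|\det\widetilde P_{UV}|$ for a $2\times 2$ matrix with top singular value $1$, then bound numerator and denominator using $p_{01},p_{10}\leq\epsilon$ and $p_{00}+p_{11}\leq 1$. The only cosmetic difference is that the paper splits the determinant ratio into two fractions before estimating, whereas you keep it as a single fraction; your explicit handling of the case where the right-hand side is negative (to justify dropping the absolute value) is a small improvement in rigor.
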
 

\begin{proof}
$\mu(P_{UV})$ is equal to the second singular value of
\begin{align*}
\widetilde P_{UV}=
\begin{pmatrix}
\frac{p_{00}}{\sqrt{(p_{00}+p_{01})(p_{00}+p_{10})}} & \frac{p_{01}}{\sqrt{(p_{00}+p_{01})(p_{01}+p_{11})}}\\
\frac{p_{10}}{\sqrt{(p_{10}+p_{11})(p_{00}+p_{10})}} & \frac{p_{11}}{\sqrt{(p_{10}+p_{11})(p_{01}+p_{11})}}
\end{pmatrix}.
\end{align*}
We know that the first singular value of $P_{UV}$ is $1$, so the second singular value is equal to 
\begin{align*}
\mu(P_{UV}) & = |\det\, \widetilde P_{UV}|\\
& = \frac{|p_{00}p_{11} - p_{01}p_{10}|}{ \sqrt{(p_{00}+p_{01})(p_{00}+p_{10})(p_{10}+p_{11})(p_{01}+p_{11})}  }\\
& \geq \frac{p_{00}p_{11}}{(p_{00}+\epsilon)(p_{11}+\epsilon)}   -\frac{p_{01}p_{10}}{p_{00}p_{11}}\\
%& \geq 1- \frac{\epsilon(p_{00}+p_{11}) + \epsilon^2}{(p_{00}+\epsilon)(p_{11}+\epsilon)} - \frac{\epsilon^2}{4p_{00}p_{11}}\\
& \geq 1- \frac{\epsilon(p_{00}+p_{11}) + \epsilon^2}{p_{00}p_{11}} - \frac{\epsilon^2}{p_{00}p_{11}}\\
& \geq 1 - \frac{\epsilon}{p_{00}p_{11}} - \frac{2\epsilon^2}{p_{00}p_{11}}.
\end{align*}

\end{proof}

\begin{theorem}\label{thm:common-data} $\rho_{AB}$ has a common data (in the asymptotic sense) only if $\mu(\rho_{AB})=1$.
\end{theorem}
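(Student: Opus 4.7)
The plan is to prove the two directions separately, both with short arguments that lean on the machinery already in place. For the easy direction $\mu(\rho_{AB})=1 \Rightarrow$ asymptotic common data, I would invoke Theorem~\ref{thm:extreme}(b) directly: it already produces, at $n=1$, local binary measurements whose outcomes $U,V$ satisfy $\Pr(U\neq V)=0$, with $\Pr(U=V=0)$ and $\Pr(U=V=1)$ both lying strictly in $(0,1)$ and summing to $1$. Setting $c$ equal to the smaller of these two probabilities, the same single-copy measurement witnesses the definition of asymptotic common data for every $\epsilon>0$.

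For the harder direction, the plan is to combine the tensor-stable data processing inequality with the continuity estimate of Lemma~\ref{lem:contin}. Given any $\epsilon>0$, the definition of asymptotic common data supplies some $n$ and local binary measurements on $\rho_{AB}^{\otimes n}$; since a POVM with classical output is a local CPTP super-operator, the induced distribution $P_{UV}$ is a local transformation of $\rho_{AB}^{\otimes n}$, so Corollary~\ref{cor:1} gives $\mu(\rho_{AB})\geq \mu(P_{UV})$. The hypotheses $p_{01}+p_{10}\leq\epsilon$ (so $p_{01},p_{10}\leq\epsilon$ individually) and $p_{00},p_{11}\geq c$ then plug directly into Lemma~\ref{lem:contin} to give
\[
\mu(\rho_{AB}) \;\geq\; 1 - \frac{\epsilon + 2\epsilon^{2}}{c^{2}}.
\]
Since $c$ is fixed and $\epsilon$ is arbitrary, sending $\epsilon\to 0$ forces $\mu(\rho_{AB})\geq 1$, and comparing with $\mu(\rho_{AB})\leq 1$ from Theorem~\ref{thm:extreme} yields equality.

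I do not anticipate any real obstacle: the three previous ingredients (Theorem~\ref{thm:extreme}(b), Corollary~\ref{cor:1}, and Lemma~\ref{lem:contin}) do essentially all of the work. The only point that deserves a careful sanity check is that the ``measure and record the outcome'' procedure in the definition of asymptotic common data is indeed a pair of local completely positive trace-preserving maps on $\rho_{AB}^{\otimes n}$ into classical registers---standard, since each such binary POVM corresponds to a quantum-to-classical channel $X\mapsto \tr(M_0 X)\,\ket 0\bra 0 + \tr(M_1 X)\,\ket 1\bra 1$---so that Corollary~\ref{cor:1} legitimately applies.
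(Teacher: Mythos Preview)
Your proposal is correct and follows essentially the same route as the paper: Theorem~\ref{thm:extreme}(b) for the forward direction, and for the converse Corollary~\ref{cor:1} combined with Lemma~\ref{lem:contin} (using $p_{00}p_{11}\geq c^2$) to get $\mu(\rho_{AB})\geq 1-\epsilon/c^{2}-2\epsilon^{2}/c^{2}$ and then letting $\epsilon\to 0$. The additional care you take in justifying that the measurement is a local CPTP map and that $p_{01},p_{10}\leq\epsilon$ individually is fine but not where any difficulty lies.
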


\begin{proof} If $\rho_{AB}$ has a common data in the asymptotic sense, then by definition for every $\epsilon>0$ and sufficiently large $n$, $\rho_{AB}^{\otimes n}$ under local measurements can be transformed to random variables $U$ and $V$ such that $\Pr(U\neq V)\leq \epsilon$ and $\Pr(U=0, V=0), \Pr(U=1, V=1)\geq c$ for some constant $c>0$ that is independent of $n$ and $\epsilon$.  Thus, using Lemma~\ref{lem:contin} and Corollary~\ref{cor:1} we have
$$\mu(\rho_{AB}) \geq 1 - \frac{\epsilon}{c^2} - \frac{2\epsilon^2}{c^2}.$$
The claim follows since this inequality holds for all $\epsilon>0$.
\end{proof}\\

In the example at the end of Section~\ref{sec:new} we see that maximal correlation can indeed take any value between $0$ and $1$. On pure states however it takes only the extreme values.

\begin{proposition}\label{prop:pure}
\begin{itemize}

\item[\rm{(i)}] Suppose $\rho_{AB}$ is pure. Then $\mu(\rho_{AB})=0$ if $\rho_{AB}$ is separable and $\mu(\rho_{AB})=1$ if $\rho_{AB}$ is entangled.

\item[\rm{(ii)}] If $\|  \rho_{AB} - \tau_{AB} \|_{\text{\emph{tr}}} \leq \epsilon$ where $\tau_{AB}$ is a maximally entangled state and $\epsilon \leq 1/10$, then 
$\mu(\rho_{AB}) \geq 1- 9 \epsilon$.

\end{itemize}
\end{proposition}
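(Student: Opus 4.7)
My plan has two parts, one per item.

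For part (i), if $\rho_{AB}$ is pure and separable then $\rho_{AB}=\rho_A\otimes\rho_B$, and Theorem~\ref{thm:extreme}(a) immediately gives $\mu(\rho_{AB})=0$. If instead $\rho_{AB}=\ket\psi\bra\psi$ is pure and entangled, I would write the Schmidt decomposition $\ket\psi_{AB}=\sum_{i=0}^{r-1}\sqrt{p_i}\ket i_A\ket i_B$ with $r\ge 2$ nonzero $p_i$, and apply the rank-one projective measurements $\{\ket 0\bra 0_A, I_A-\ket 0\bra 0_A\}$ and $\{\ket 0\bra 0_B, I_B-\ket 0\bra 0_B\}$. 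Orthogonality of the Schmidt basis forces both cross probabilities to vanish, while the two diagonal probabilities are $p_0$ and $1-p_0$, both in $(0,1)$; Theorem~\ref{thm:extreme}(b) then yields $\mu(\rho_{AB})=1$.

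For part (ii), write $\tau_{AB}=\ket\Phi\bra\Phi$ with $\ket\Phi_{AB}=\frac{1}{\sqrt d}\sum_i\ket{ii}$ in the Schmidt basis, so $\tau_A=\tau_B=I/d$. The strategy is to exhibit optimal observables for $\mu(\tau_{AB})$ that are \emph{unitary}, so that the operator-norm factor in every perturbation bound is $1$ and the unit-variance constraint transfers to $\rho$ for free. Concretely, take $X_A$ to be any traceless unitary in the chosen basis (for instance the cyclic permutation $X_A=\sum_i\ket{(i+1)\bmod d}\bra i_A$), and set $Y_B=X_A^T$ (transpose in the same basis). Then $Y_B$ is also traceless and unitary, and
\[
\bra\Phi X_A\otimes Y_B^\dagger\ket\Phi \;=\; \frac{1}{d}\tr\bigl(X_A\,\overline{X_A^T}\bigr) \;=\; \frac{1}{d}\tr\bigl(X_AX_A^\dagger\bigr) \;=\; 1,
\]
so $(X_A,Y_B)$ is optimal for $\mu(\tau_{AB})=1$.

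With these observables I set $\alpha:=\tr(\rho_A X_A)$ and $\beta:=\tr(\rho_B Y_B)$. Since $\|X_A\|_\infty=\|Y_B\|_\infty=1$, partial-trace contractivity of the trace norm gives $|\alpha|,|\beta|\le\|\rho_{AB}-\tau_{AB}\|_{\text{tr}}\le\epsilon$, and the identities $\tr(\rho_A X_AX_A^\dagger)=\tr(\rho_A I_A)=1$ and $\tr(\rho_B Y_BY_B^\dagger)=1$ hold \emph{exactly}. The same estimate gives $|\tr(\rho_{AB}X_A\otimes Y_B^\dagger)|\ge 1-\epsilon$. To enforce the zero-mean constraint exactly, I would center and renormalize: replace $X_A$ by $X'':=(X_A-\alpha I_A)/\sqrt{1-|\alpha|^2}$ and similarly $Y_B$ by $Y''$. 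A short algebraic computation then shows
\[
\mu(\rho_{AB})\;\ge\;|\tr(\rho_{AB}X''\otimes Y''^{\dagger})|\;=\;\frac{|\tr(\rho_{AB}X_A\otimes Y_B^\dagger)-\alpha\bar\beta|}{\sqrt{(1-|\alpha|^2)(1-|\beta|^2)}}\;\ge\;\frac{(1-\epsilon)-\epsilon^2}{1},
\]
so $\mu(\rho_{AB})\ge 1-\epsilon-\epsilon^2$, which for $\epsilon\le 1/10$ comfortably exceeds $1-9\epsilon$ (the stated constant $9$ being a convenient round number rather than the sharpest achievable).

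The main obstacle is to keep the estimate \emph{uniform in the dimension $d$} of the maximally entangled state. A generic pair of optimizers for $\mu(\tau_{AB})$ can have operator norm growing with $d$, in which case $|\tr((\rho-\tau)X_A\otimes Y_B^\dagger)|$ is no longer controlled by $\epsilon$ alone and the dimension-independent constant in $1-9\epsilon$ is lost. The trick of picking a traceless unitary $X_A$ together with its transpose $Y_B=X_A^T$ kills both birds at once: unitarity caps all operator norms by $1$ and makes the unit-variance constraint automatic, while the transpose-pairing guarantees $\bra\Phi X_A\otimes Y_B^\dagger\ket\Phi=1$ exactly. The subsequent centering step is routine.
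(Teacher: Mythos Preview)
Your proof is correct in both parts, and for part~(ii) it takes a genuinely different route from the paper.

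For part~(i), the separable case is handled identically via Theorem~\ref{thm:extreme}(a). For the entangled case the paper directly exhibits observables $X_A,Y_B$ (supported on two Schmidt vectors, with a suitable normalizing constant) that attain the value $1$ in the variational definition of $\mu$; you instead exhibit local rank-one projective measurements whose outcomes are perfectly correlated and invoke the ``if'' direction of Theorem~\ref{thm:extreme}(b). Both arguments are valid; yours reuses a theorem already proved while the paper's is self-contained at the level of the definition.

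For part~(ii) the two approaches differ substantially. The paper first applies coarse local projective measurements on $\tau_{AB}$ (splitting the Schmidt basis roughly in half) to produce a pair of perfectly correlated bits, transfers these measurement statistics to $\rho_{AB}$ using $\|\rho_{AB}-\tau_{AB}\|_{\text{tr}}\le\epsilon$, and then feeds the resulting nearly-correlated binary distribution into Lemma~\ref{lem:contin} together with Corollary~\ref{cor:1}. Your argument stays entirely at the operator level: by choosing $X_A$ to be a traceless \emph{unitary} and $Y_B=X_A^{T}$ you simultaneously achieve (a) the value $1$ on $\tau_{AB}$, (b) operator norms equal to $1$, so every perturbation term is bounded by $\epsilon$ independently of the dimension $d$, and (c) the identity $X_AX_A^{\dagger}=I$, which makes the unit-variance constraint hold exactly for \emph{any} marginal state. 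The subsequent center-and-renormalize step is straightforward. Your approach is more direct, does not need Lemma~\ref{lem:contin}, and in fact yields the sharper bound $\mu(\rho_{AB})\ge 1-\epsilon-\epsilon^{2}$, confirming your remark that the constant $9$ in the statement is far from optimal.
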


\begin{proof} (i) If $\rho_{AB}$ is separable, $\mu(\rho_{AB})=0$ follows from part (a) of Theorem~\ref{thm:extreme}. Thus suppose $\rho_{AB}=\ket \psi\bra \psi_{AB}$ where $\ket \psi_{AB}$ is entangled with Schmidt decomposition 
$$\ket\psi_{AB} = \sum_i \alpha_i \ket{v_i}_A\otimes \ket{w_i}_B.$$
Since $\ket \psi_{AB}$ is entangled at least two of the Schmidt coefficients (say) $\alpha_1$ and $\alpha_2$ are non-zero.
Define
$$X_A = c\alpha_2^2\ket {v_1}\bra {v_1} - c\alpha_1^2\ket {v_2}\bra{v_2},$$
and 
$$Y_A = c\alpha_2^2\ket {w_1}\bra {w_1} - c\alpha_1^2\ket {w_2}\bra{w_2},$$
where $c^{-1}=\alpha_1\alpha_2(\alpha_1^2+\alpha_2^2)^{1/2}$. Then $X_A$ and $Y_B$ satisfy~\eqref{eq:con1}, \eqref{eq:con2} and $\tr(\rho_{AB} X_A\otimes Y_B)=1$. As a result  $\mu(\rho_{AB})=1.$

(ii) Since $\tau_{AB}$ is a maximally entangled state, there are local measurements $\{M_A^0, M_A^1\}$ and $\{N_B^0, N_B^1\}$ such that $\tr(\tau_{AB}M_A^0\otimes N_B^1) = \tr(\tau_{AB}M_A^1\otimes N_B^0)=0$ and 
$$\tr(\tau_{AB}M_A^0\otimes N_B^0) = \frac{1}{d} \left\lfloor \frac{d}{2}\right\rfloor, \quad\quad   \tr(\tau_{AB}M_A^1\otimes N_B^1) = \frac{1}{d} \left\lceil \frac{d}{2}\right\rceil,$$
where $d$ is the minimum of the dimensions of registers $A$ and $B$. Let 
$$p_{uv}  =\tr(\rho_{AB}M_A^u\otimes N_B^v).$$
Then, using $\|\rho_{AB}- \tau_{AB}\|_{\tr}\leq \epsilon$ we find that 
\begin{align*}
\left| p_{00}-\frac{1}{d} \left\lfloor \frac{d}{2}\right\rfloor\right| + \left| p_{11} - \frac{1}{d} \left\lceil \frac{d}{2}\right\rceil  \right| + p_{01} + p_{10} \leq \epsilon,
\end{align*}
which using $\epsilon\leq 1/10$ implies 
\begin{align}\label{eq:rt1}
p_{00}p_{11}\geq \left(\frac{1}{d} \left\lfloor \frac{d}{2}\right\rfloor -\epsilon \right)\left( \frac{1}{d} \left\lceil \frac{d}{2}\right\rceil - \epsilon\right) \geq  \left(\frac{1}{3}-\epsilon\right)\left(\frac{2}{3}-\epsilon\right)\geq \frac{7\times 17}{30^2},
\end{align} 
and 
$p_{01} + p_{10}< \epsilon$.
By Corollary~\ref{cor:1} we have $\mu(\rho_{AB}) \geq \mu(P_{UV})$. So it suffices to show that $\mu(P_{UV}) \geq 1-9\epsilon$ which is a simple consequence of Lemma~\ref{lem:contin}. 

\end{proof}

Although part (ii) of this proposition states the continuity of $\mu(\cdot)$ at maximally entangled states, by part (i) it takes values $0$ or $1$ on pure states and is not continuous at separable states. We note that such a seemingly undesirable property is unavoidable. The main point is that for every entangled pure state
$\ket\psi_{AB}$, its $n$-fold tensor product $\ket{\psi}_{AB}^{\otimes n}$ is close to a maximally entangled state.  
Then, considering the continuity of $\mu(\cdot)$ at maximally entangled states, we conclude that $\mu(\ket \psi\bra\psi_{AB})= 1$.  

\section{Concluding remarks}

In this paper we generalized a measure of bipartite correlation called maximal correlation to quantum states.
We showed that this measure satisfies  $\mu(\rho_{AB}^{\otimes n})= \mu(\rho_{AB})$, and proved a data processing type inequality for it. We note that, being a Schmidt coefficient, the quantum maximal correlation can be computed efficiently.

$\mu(\cdot)$ is a measure of total classical and quantum correlations and takes its maximum value on two perfectly correlated bits. This implies that $\mu(\cdot)$ is not well-behaved under (even one bit of) classical communication. It is tempting to look for a measure of \emph{quantum} correlation with the above properties that vanishes on classically correlated states. In that case we could use such a measure to study the problem of entanglement distillation under LOCC maps. Even proving the nonexistence of such a measure would be interesting. \\

\noindent\textbf{Acknowledgements.} The author is thankful to Amin Gohari for introducing~\cite{KangUlukus}, to  Yury Polyanskiy for sending a copy of~\cite{Polyanskiy}, and to Saleh Rahimi-Keshari for sharing Example~\ref{example:hermitian}. The author is also grateful to the unknown referee who pointed an error in the statement of Theorem~\ref{thm:common-data}.
This research was in part supported by National Elites Foundation and by a grant from IPM (No.\ 91810409).

%%%%%%%%%%%%%%%%%%%%%%%%%%%%%%%%%%%%%%%%%%%%%%%

\end{document}